\numberwithin{table}{section}
\title{Polynomial processes for power prices%
\thanks{This work was supported by the Natural Sciences and Engineering Research Council of Canada (NSERC).}
}
\author{Damir Filipovic%
\thanks{EPFL and Swiss Finance Institute, Extranef 218, CH-1015 Lausanne, Switzerland, \texttt{damir.filipovic@epfl.ch}}%
\and  Martin Larsson%
\thanks{Department of Mathematics, ETH Z\"urich, R\"amistrasse 101, CH-8092 Z\"urich, \texttt{martin.larsson@math.ethz.ch}}
\and Tony Ware%
\thanks{Department of Mathematics and Statistics, University of Calgary, 2500 University Drive NW, Calgary, Canada. T2N 1N4, \texttt{aware@ucalgary.ca}}}
\date{\today}
\begin{document}
\maketitle

\begin{abstract}
Polynomial processes have the property that expectations of polynomial functions (of degree $n$, say) of the future state of the process conditional on the current state are given by polynomials (of degree $\leq n$) of the current state. Here we explore the application of polynomial processes in the context of structural models for energy prices. We focus on the example of Alberta power prices, derive one- and two-factor models for spot prices. We examine their performance in numerical experiments, and demonstrate that the richness of the dynamics they are able to generate makes them well suited for modelling even extreme examples of energy price behaviour.
\end{abstract}

\section{Introduction}
The class of polynomial processes is characterized by the property that the expectation of any polynomial function (perhaps up to some degree $n$, say) of the future state of the process, conditional on its current state, is given by a polynomial (of no higher degree)) of the current state. It includes exponential L\'evy processes, affine processes, as well as Pearson diffusions, allowing for non-trivial dynamics on compact state spaces, for example.
The use of polynomial processes in financial modeling goes back at least to the early 2000s, with the work of Delbaen and Shirikawa 
\cite{DelbaenShirakawa2002} and Zhou 
\cite{Zhou2003} on interest rate modelling. More recently, %
Cuchiero et.~al.~\cite{CuchieroKeller-ResselTeichmann2012} have given a systematic treatment of time-homogeneous Markov jump-diffusion polynomial processes, along with applications to variance reduction techniques for option valuation and hedging.

Filipovic and Larsson 
\cite{FilipovicLarsson2016} lay out the mathematical foundations for polynomial diffusions on a large class of state spaces, and describe their potential advantages in a wide range of applications in finance, such as market models for interest rates, credit risk, stochastic volatility and energy commodities. It is this last area that is the focus of this paper.

Energy market prices are notable for exhibiting seasonality, mean reversion and extreme volatility, often in the form of short-lived `spikes'. These features arise from the interaction between supply and demand, with each subject to seasonal variations, and in particular from constraints on storage, transmission or transportation, perhaps exacerbated by sudden changes due to unforeseen infrastructure breakdowns. (For recent overviews of modelling and risk management issues in energy markets, we refer to \cite{BenthBenthKoekebakker2008,Kaminski2012,CarmonaCoulon2014,BurgerGraeberSchindlmayr2014,Swindle2015}.)

Power prices are perhaps an extreme example, due in no small part to the difficulties in storing electricity efficiently in any significant volumes (although advances in battery technology are beginning to change this), which means that the relationship between supply and demand in power markets plays a significant role in price formation, especially in deregulated markets. Here we focus on one such market in Alberta, Canada, which deregulated in the late 1990s.

\subsection{The Alberta power market}
The Alberta wholesale real-time electricity market is facilitated by the Alberta Electric Systems Operator (AESO). The AESO system controller sets a System Marginal Price (SMP) in response to system demand, and according the prevailing merit order. The merit order, or \emph{bid stack} is determined from supply and demand bids received for each hour, which are sorted in order from the lowest to the highest price. At each moment, the SMP corresponds to the last eligible electricity block dispatched by the system controller. At the end of each hour, a time-weighted average of the SMPs is published as the pool price. The market in Alberta has a price cap of CAD\$1000, and a price floor of \$0. Producers are obliged to offer all of their available capacity into the market, but are under no obligation to ensure that prices correspond to variable cost. 

An example bid stack, for noon to 1pm on 7th March, 2012 (obtained from \url{aeso.ca}), is shown in Figure~\ref{fig:bidstack}. Bids ranged from \$0 to \$999.99. On average, 7713MW were dispatched through the hour, and the pool price settled at \$22.67. The shape of this curve varies from hour to hour within the day to reflect the anticipated demand. The steep rise in the curve above (in this case) about 8000MW means that prices can easily rise dramatically. This potential is reflected in Figure~\ref{fig:bidstack}, which shows daily average prices from 1998 to early 2016. 

\begin{figure}
\centering
\includegraphics[width=0.4\linewidth]{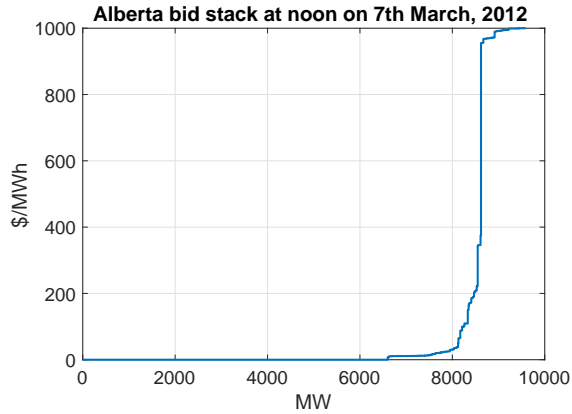}
\caption{The Alberta power pool bid stack at noon on 7th March, 2012. Note that just over 6000MW has been bid in at zero. In this hour, an average of 7713MW was dispatched, and the pool price settled at \$22.67 (source: \url{aeso.ca}). As can be seen, if demand had been such that much more than 8000MW needed to be dispatched, the SMP would have risen sharply.}
\label{fig:bidstack}
\end{figure}

\begin{figure}
\centering
\includegraphics[width=0.85\linewidth]{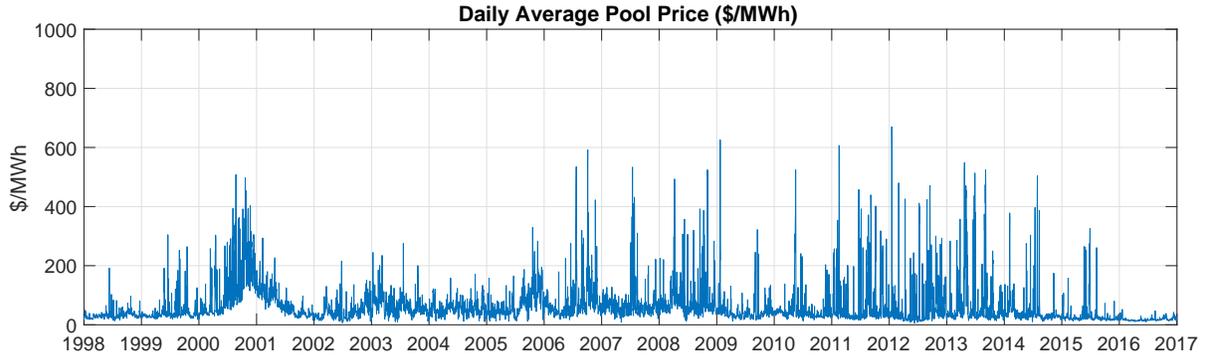}
\caption{Daily average Alberta pool prices between 1998 and 2016.}
\label{fig:dailyprices}
\end{figure}

\subsection{Structural models for power prices}
Structural models attempt to capture aspects of the interaction between supply and demand, while maintaining some level of mathematical tractability (c.f.~\cite{Pirrong2011,CarmonaCoulon2014,Weron2014}). 
One of the earliest such models for power prices is that of Barlow \cite{Barlow2002}. He takes demand to be a linear function of a factor $X_t$, modelled as an Ornstein-Uhlenbeck (OU) process:
\begin{equation}
\label{eq:OU}
\dd X_t = -\kappa(X_t-\theta)\dd t + \sigma \dd W_t,
\end{equation} 
where $W_t$ is a standard Brownian motion.
Spot prices, denoted by $S_t$, are given (for some $\alpha<0$) by
\[ S_t = f_{\alpha}(X_t) = \begin{cases}\big(1+\alpha X_t\big)^{\frac{1}{\alpha}} &\word{if $1+\alpha X_t>\epsilon_0$,}\\ \epsilon_0^{\frac{1}{\alpha}}&\word{otherwise,}
\end{cases}
\]
where $\epsilon_0$ is chosen to reflect the maximum price in the market.

\begin{figure}
\centering
\includegraphics[width=0.9\linewidth]{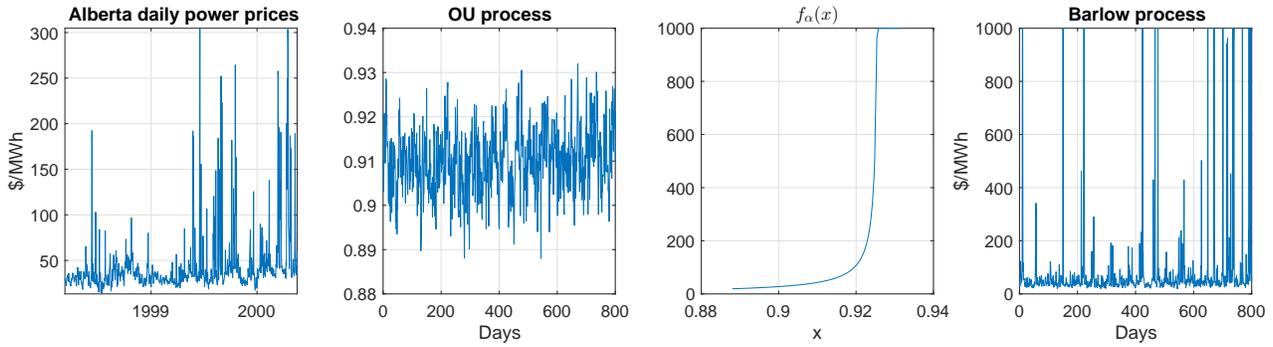}
\caption{Barlow's structural model for power prices \cite{Barlow2002}. From left to right: historical daily average prices for the period 10th March 1998-18th May 2000; a simulation of $X_t$ produced using calibrated parameters for that period taken from Table~4.1 of \cite{Barlow2002}; the function $f_{\alpha}(x)$; the resulting prices $S_t = f_{\alpha}(X_t)$.}
\label{fig:barlow}
\end{figure}

As can be seen in Figure~\ref{fig:barlow}, even though the underlying demand process is a diffusion process, the model is capable of generating the kinds of short-lived spikes in prices that are evident in the historical market price time series, although the simulated spikes are if anything even more extreme than the historical values, and the \$1000 limit is frequently reached. The presence of this cutoff value means that it is not possible to obtain explicit expressions for forward prices within the framework of this model, something that Barlow himself notes (\cite{Barlow2002}).

Nevertheless, Barlow's model has had a strong influence on the subsequent development of structural models. Kanamura and \=Ohashi 
\cite{KanamuraOhashi2007}, and Boogert and Dupont 
\cite{BoogertDupont2008} take a similar approach, but use a piecewise polynomial map in place of the Box-Cox transform. Some (for example \cite{CarteaVillaplana2008,ElliottLyle2009}) have generalized the approach by introducing additional underlying factors such as total market capacity, also following a mean-reverting diffusion process, and expressed prices as an exponential function of a linear combination of these factors. This generates models with a similar mathematical structure to that of Schwartz and Smith 
\cite{SchwartzSmith2000} for capturing short- and long-term dynamics in commodity prices. 

Other authors have expressed the power prices as a `heat-rate' multiplier of a power of a marginal fuel price, where the heat rate is some function of the underlying demand factors \cite{PirrongJermakyan2008}. There may be more than one such multiplier, corresponding to different market conditions. For example, Coulon et.~al.~\cite{CoulonPowellSircar2013} generate spot prices in the form
\[ S_t = (-1)^{\overline{\delta}_i}\Psi_i(G_t,L_t),\]
where the choice of function $\Psi_i$ is made with probability $\pi_i$ depending on $L_t$ and $\overline{\delta}_i\in\{0,1\}$ ($\overline{\delta}_i=1$ corresponds to negative prices), and takes the form
\[ \Psi_i(g,l) = g^{\delta_i}\ee^{\alpha_i+\beta_i\max(0,\min(l,C_{\max}))},\]
with $\delta_i\in\{0,1\}$ indicating the presence of fuel price dependence.

Dependence on multiple fuel prices can also be captured in the structural modelling paradigm. A framework for capturing dependence on several fuel prices involving a stochastic bid stack function is developed in \cite{HowisonCoulon2009,CarmonaCoulonSchwarz2013}. Their framework allows for random changes in the merit order, and---at least for two underlying fuels---can generate explicit formulae for forward prices. A\"id et.~al.~\cite{AidCampiLangrene2013} assume one bid price per fuel type, and also are able to capture spikes by incorporating a heat rate function involving a power law of the reserve margin. They argue that the use of the power law (instead of an exponential map) makes it possible to reproduce sharp spikes even for smooth and rather simple dynamics of the underlying demand and capacity processes.

\section{Polynomial processes for energy markets}
The main idea we present in this paper is that energy market models can be generated  by using a polynomial map to capture the role of the heat rate (or bid stack) function, and using a polynomial process for the underlying factor(s). As we hope to demonstrate, this provides a framework that allows for more general relationship between underlying factors and resulting prices than that afforded by the use of exponential or power law maps, while ensuring tractable forward price formulae.

If $X_t$ follows a polynomial process, then we can generate a spot price model in the form
\begin{equation}
\label{eq:spotpolynomial}
S_t = \Phi(X_t) = H(X_t)^\top  \vec p,
\end{equation}
where $H(x)$ is a vector of basis functions for the space of polynomials preserved by the polynomial process (for example, in one dimension, $H(x) = (1,x,x^2,\ldots,x^n)^\top$ for some fixed $n\in\N$), and $\vec p$ is the coefficient vector that defines the polynomial map $\Phi$.

Seasonality may be incorporated by making the polynomial coefficient vector $\vec p$ time-dependent. Setting $\vec p(t) = \sum_{k=1}^K s_k(t) \vec p_k$ for some fixed vectors $\vec p_1,\dots,\vec p_K$ and deterministic functions $s_1(t),\ldots,s_K(t)$ yields
\begin{equation}
\label{eq:spotpolynomialS}
S_t = H(X_t)^\top \sum_{k=1}^K s_k(t) \vec p_k.
\end{equation}

Moreover, we can write futures prices straightforwardly, exploiting the polynomial property of the factor process. We have
\begin{equation}
\label{eq:forwardpolynomial}
F(t,T,T') = \frac{1}{T'-T} H(X_t)^\top \sum_{k=1}^K \left[\int_T^{T'} e^{(u-t)G} s_k(u)du\right]\vec p_k,
\end{equation}
where the interval $[T,T']$ is the contract delivery period, and
$G$ is the matrix representation of the generator of $X_t$ with respect to the basis $H(x)$ under the pricing measure $\Q$. That is, if $\Phi(x)=H(x)^\top\vec p$ is a polynomial of degree at most $n$, then $\E_{\Q}[\Phi(X_t)|X_0]=H(X_0)^\top e^{tG}\,\vec p$.

\begin{example}
A tractable example of seasonality weight is $s(t)=\cos(ct)$ where $c$ is a constant. Indeed, since $\cos(ct)={\rm Re}(e^{{\rm i} ct})$ we have the explicit expression
\begin{align*}
\int_T^{T'} e^{(u-t)G} s(u)du
&= {\rm Re}\left( \int_T^{T'} e^{(u-t)G} e^{{\rm i} cu} du \right) = {\rm Re}\left( e^{{\rm i}ct} \int_T^{T'} e^{(u-t)(G+{\rm i}c)} du \right) \\
&= {\rm Re}\left( e^{{\rm i}ct} (G+{\rm i}c)^{-1} \left( e^{(T'-t)(G+{\rm i}c)} - e^{(T-t)(G+{\rm i}c)}\right) \right),
\end{align*}
provided $G+{\rm i}c$ is invertible, i.e.~$-{\rm i}c$ is not an eigenvalue of $G$. In particular, this certainly holds if $G$ has only real eigenvalues. Clearly, a seasonality weight expressed as a linear combination of such Fourier modes can be dealt with in a similar manner.
\end{example}

\subsection{One factor specifications}
The class of one-factor polynomial diffusion processes includes Geometric Brownian Motion, the OU process \eqref{eq:OU}, as well as other mean-reverting processes such as Inhomogeneous Geometric Brownian Motion (IGBM) \cite{Pilipovic1997,BosWarePavlov2002}, defined by
\begin{equation}
\label{eq:IGBM}
\dd X_t = \kappa(\theta-X_t) \dd t + \sigma X_t\dd W_t,
\end{equation}
where $\kappa>0$, $\theta> 0$, $\sigma>0$, and, here and below, $W$ is Brownian motion on some filtered probability space $(\Omega,\Fcal,\Fcal_t,\P)$. Whereas for the OU process the state space is $\R$, for IGBM \eqref{eq:IGBM} $X_t\in (0,\infty)$ for $t>0$ a.s.~if $X_0>0$. 

The Cox-Ingersoll-Ross process is also polynomial:
\begin{equation}
\label{eq:CIR}
\dd X_t = \kappa(\theta-X_t)\dd t + \sigma \sqrt{X_t}\dd W_t.
\end{equation}
For the CIR process \eqref{eq:CIR}, $X_t\in(0,\infty)$ a.s.~if $2\kappa\theta\geq \sigma^2$.

Our last example in this section is the Jacobi process, which will form the foundation for the models we explore further below. It is defined by
\begin{equation}
\label{eq:Jacobi}
\dd X_t = \kappa(\theta - X_t)\,\dd t + \sigma\sqrt{X_t(1-X_t)}\,\dd W_t,
\end{equation}
where $\kappa>0$, $\theta\in[0,1]$, and $\sigma>0$. Here the state space is $[0,1]$, and 
$X_t\in (0,1)$ a.s.~if $2\kappa\min\{\theta,1-\theta\geq \sigma^2 $ (see Appendix A for more details). 

\subsection{Some two-factor specifications with bounded state spaces}

A simple extension of the Jacobi one-factor model described above is the following:
\begin{align*}
\dd X_t &= (b_1 + B_{11}X_t + B_{12}Y_t) \dd t + \sigma\sqrt{X_t(1-X_t)}\dd W_{1t} \\
\dd Y_t &= (b_2 + B_{22}Y_t) \dd t + \rho \sqrt{Y_t(1-Y_t)}\dd W_{2t}.
\end{align*}
Here $Y_t$ acts as a level factor that affects the level of mean reversion of $X_t$. In this simple specification, $Y_t$ is an autonomous Jacobi process, and there is no quadratic covariation between the factors: $\langle X,Y\rangle=0$. Other possible dynamics for $Y_t$ are discussed below.

In this setting the spot price and futures prices are given by almost identical formulas; the only difference is the appearance of the second factor $Y_t$:
\begin{align*}
S_t &= H(X_t,Y_t)^\top \sum_{k=1}^K s_k(t) \vec p_k \\
F(t,T,T') &= \frac{1}{T'-T} H(X_t,Y_t)^\top \sum_{k=1}^K \int_T^{T'} e^{(u-t)G} s_k(u)\dd u\,\vec p_k,
\end{align*}
where:
\begin{itemize}
\item $H(x) = (1,x,y,x^2,xy,y^2,\ldots,x^n,x^{n-1}y,\ldots,xy^{n-1},y^n)^\top$ for some fixed $n\in\N$.
\item $s_1(t),\ldots,s_K(t)$ are deterministic functions capturing the seasonality in prices.
\item $\vec p_k$ are the coordinate representations of the polynomials that map $(X_t,Y_t)$ to the spot price~$S_t$. 
\item The interval $[T,T']$ is the delivery period of the underlying.
\item $G$ is the matrix representation of the generator of $(X_t,Y_t)$ with respect to the basis $H(x,y)$. That is, if $\Phi(x,y)=H(x,y)^\top\vec p$ is a polynomial of degree at most $n$, then $\E[p(X_t,Y_t)]=H(X_0,Y_0)^\top e^{tG}\,\vec p$.
\end{itemize}

Some other possible polynomial preserving factor specifications are as follows:

\begin{example}
\label{eg:DoubleJacobi}
Feedback from $X_t$ into the drift of $Y_t$:
\begin{align*}
\dd X_t &= (b_1 + B_{11}X_t + B_{12}Y_t) \dd t + \sigma\sqrt{X_t(1-X_t)}\dd W_{1t} \\
\dd Y_t &= (b_2 + B_{21}X_t + B_{22}Y_t) \dd t + \rho\sqrt{Y_t(1-Y_t)}\dd W_{2t}.
\end{align*}
\end{example}

\begin{example}
The range of $X_t$ depending on $Y_t$:
\begin{align*}
\dd X_t &= (b_1 + B_{11}X_t + B_{12}Y_t) \dd t + \sigma\sqrt{X_t(\mu+\nu Y_t-X_t)}\dd W_{1t} \\
\dd Y_t &= (b_2 + B_{21}X_t + B_{22}Y_t) \dd t + \rho\sqrt{Y_t(1-Y_t)}\dd W_{2t}.
\end{align*}
for suitable parameters $\mu\ge0$ and $\nu\ge0$. Here $X_t$ takes values in $[0,\mu+\nu Y_t]$ and $Y_t$ takes values in $[0,1]$. Thus the state space is $E=\{(x,y): 0\le x\le \mu+\nu y, 0\le y\le 1\}$, which is a parallelogram.
\end{example}

\begin{example}
In all the above examples we have $\langle X,Y\rangle=0$. The following specification relaxes this condition:
\begin{align*}
\dd X_t &= (b_1 + B_{11}X_t + B_{12}Y_t) \dd t + \sqrt{1-X_t^2-Y_t^2}\, ( \alpha_{11} \dd W_{1t} + \alpha_{12} \dd W_{2t} )\\
\dd Y_t &= (b_2 + B_{21}X_t + B_{22}Y_t) \dd t + \sqrt{1-X_t^2-Y_t^2}\, ( \alpha_{12} \dd W_{1t} + \alpha_{22} \dd W_{2t} ),
\end{align*}
where $\alpha=(\alpha_{ij})_{i,j=1,2}$ is a symmetric positive definite matrix. This can be written in vectorized form as
\[
\begin{pmatrix}\dd X_t\\ \dd Y_t\end{pmatrix} = \left( b + B\begin{pmatrix}X_t\\Y_t\end{pmatrix} \right)\dd t + \sqrt{1-X_t^2-Y_t^2}\, \alpha\, \dd W_t,
\]
where $b=(b_1,b_2)^\top$, $B=(B_{ij})_{i,j=1,2}$, and $W_t=(W_{1t},W_{2t})$. The state space for this process is the unit disk, $E=\{(x,y): x^2+y^2\le 1\}$. In particular, $X_t$ takes values in $[-\sqrt{1-Y_t^2},\sqrt{1-Y_t^2}]$.
\end{example}

\begin{example}
\label{eg:RegimeSwitching}
Regime switching model:
\begin{align*}
\dd X_t &= (b_1 + B_{11}X_t + B_{12}Y_t) \dd t + \sigma\sqrt{X_t(1-X_t)}\dd W_t \\
\dd Y_t &= (1-Y_t)\dd N^0_t+Y_t\dd N^1_t,
\end{align*}
where $N^0_t$ and $N^1_t$ are standard Poisson processes with intensities $\lambda_{0\to 1}$ and $\lambda_{1\to 0}$ (resp.). With $Y_0\in\{0,1\}$, $Y_t$ alternates between the two values $0$ and $1$. Thus the state space is $E=[0,1]\times\{0,1\}$. To compute the generator of $(X,Y)$, consider a $C^2$ function $f(x,y)$ on $E$. It\^o's formula yields
\begin{align*}
f(X_t,Y_t) &= f(X_0,Y_0) + \int_0^t \left(  (b_1 + B_{11}X_s + B_{12}Y_s)\partial_x f(X_s,Y_s) + \frac{1}{2}\sigma^2 X_s(1-X_s)\partial_{xx} f(X_s,Y_s)\right)\dd s\\
&\quad + \int_0^t \partial_x f(X_s,Y_s)\sigma\sqrt{X_s(1-X_s)} \dd W_s + \sum_{s\le t}\left( f(X_s,Y_s) - f(X_s,Y_{s-}) \right) \\
&= f(X_0,Y_0) + \int_0^t \Gcal f(X_s,Y_s) \dd s + {\rm(local\ martingale)},
\end{align*}
where
\[
\Gcal f(x,y) = (b_1 + B_{11}x + B_{12}y)\partial_x f(x,y) + \frac{1}{2}\sigma^2 x(1-x)\partial_{xx} f(x,y)
+\lambda_{0\to 1}(1-y)f(x,0) -\lambda_{1\to 0}y f(x,1).
\]
For this state space there are fewer polynomials than on $\R^2$. Indeed, any polynomial $p(x,y)$ on $E$ is of the form $p(x,y) = (1-y)p_0(x)+ yp_1(x)$. Thus a simpler basis can be used in this case, for example
\[
H(x,y) = (1,x,y,x^2,xy,x^3,x^2y,x^4,x^3y,\cdots,x^n,x^{n-1}y)^\top.
\]
Moreover, if $B_{12}=0$, a tensor product basis can be used.
\end{example}

\subsection{Option valuation}
It turns out that, in certain settings at least, we can derive explicit option pricing formulae. Specifically, if the underlying factor process $X_t$ admits an explicit formula for European call or put options on powers of the factor (as is the case if $X_t$ follows a geometric Brownian motion), and $\Phi$ is an increasing polynomial map, then we can derive explicit formulae for vanilla European options written on $S_t = \Phi(X_t)$. For example, if we write the formula for the value of a European call option with payoff $(X_T^j-K^j)_+$ as $f_j(K)$, and $\Phi(x) = \sum_{j=0}^na_jx^j$, then the value of a European call option with payoff $(S_T-K)_+$ is given by
\[ \sum_{j=0}^n a_j f_j\left( \Phi^{-1}(K)\right).\]
More generally, option prices can be approximated by approximating the payoff by a polynomial of some degree, and computing the expectation under $\Q$ using the matrix representation $G$ of the generator of $X_t$ under $\Q$, and perhaps exploiting fast polynomial transforms where available \cite{PottsSteidlTasche1998,Keiner2011}. This can be seen as analogous to transform methods such as the COS method of Fang and Oosterlee~\cite{FangOosterlee2008}. Furthermore, polynomial expansion methods for option pricing in polynomial jump-diffusion models are discussed in Section 7 in \cite{FilipovicLarsson2016}.

\section{Modelling spot prices in Alberta's electricity market}
In this section we demonstrate the use of polynomial processes to model daily average spot prices in Alberta's electricity market. Similarly to Barlow~\cite{Barlow2002}, we do not see strong evidence of annual seasonality in these prices, so we do not incorporate seasonal terms in our spot price models.
\subsection{One-factor model}
Our one-factor model takes the form of \eqref{eq:spotpolynomial}, with $X_t$ following a Jacobi process \eqref{eq:Jacobi}, and the (increasing) polynomial map $\Phi(\cdot)$ generated according to the prescription laid out in Appendix~\ref{AppendixB} (but scaled so that $\Phi:[0,1]\mapsto [0,S_{\max}]$, with $S_{\max}=1000$ in the case of Alberta). The model with $\Phi$ of degree $n+1$ is thus determined by the parameter vector $\Theta = (\kappa,\theta,\sigma,c_1,\dots,c_n)$, where the coefficients $c_i$ correspond (in pairs) to the parameters $(\alpha,\beta)$ in the quadratic factors $\phi$ described in Appendix~\ref{AppendixB}, with the last value assigned to a pair $(0,\beta)$ if $n$ is odd.

The estimation of these parameters, for each $n$, was done using maximum likelihood estimation, with the optimization carried out using a combination of genetic search and pattern search from the global optimization toolbox in MATLAB. The pure-Jacobi model ($n=0$) was calibrated first, and the optimal parameters for each value of $n$ were used (after being augmented by $c_{n+1}=0$) to supply the initial estimate for the calibration for $n+1$.

Note that, for a given set of parameters, the conditional transition probability density for the underlying factor process $X_t$ observed on successive days\footnote{In the estimation we conduct here we are implicitly assuming that the daily average prices correspond to instantaneous observations of the process $S_t$.} is given by the function $p$ defined in \eqref{eq:JacobiTransition}, using $T-t=1/365$, and that the unconditional density is $w$. However, $X_t = \Phi^{-1}(S_t)$. Thus, if we write $s_m$ for the price observed on day $t_m$, $m=0,\dots,M$, and $x_m = \Phi^{-1}(s_m)$, then the log-likelihood can be written
\begin{equation}
LL = \sum_{m=0}^M \log p(x_m,t_m;x_{m-1},t_{m-1})-\log \diff{}{x}\Phi(x_m),
\end{equation}
where, for $m=0$, $p(x_m,t_m;x_{m-1},t_{m-1})$ collapses to the unconditional density $w(x_0)$.

\begin{table}
\begin{center}
\begin{tabular}{|r||l|l|l|l|l|l|l|l|l|l|l|l|}
\hline
deg $\Phi$ & $\kappa$ & $\theta$ & $\sigma$ & $a$ & $b$ & $c_1$& $c_2$& $c_3$& $c_4$& $c_5$ &$LL$ & BIC\\
\hline
1 & 284.73 &  0.05 &  2.81 &  3.25 & 68.85 & &&&&& 2037.3 & -4054.6 \\
2 & 237.93 &  0.15 &  3.16 &  7.24 & 40.32 & 0.86 &&&&& 2149.7 & -4272.6 \\
3 & 146.76 &  0.29 &  4.77 &  3.74 &  9.19 & 0.95 &   0.64&&&& 2492.2 & -4950.9 \\
4 & 143.96 &  0.37 &  4.70 &  4.81 &  8.21 & 0.86 &   0.76 &   0.73&&& 2501.5 & -4962.8 \\
5 & 140.10 &  0.42 &  6.09 &  3.18 &  4.37 & 0.17 &   0.91 &   0.96 &   0.59&& 2534.9 & -5023.0 \\
6 & 140.14 &  0.43 &  6.11 &  3.26 &  4.26 & 0.04 &   0.91 &   0.95 &   0.60 &   0.19& 2535.1 & -5016.8 \\
\hline
\end{tabular}
\caption{Optimal parameters, log-likelihoods and Bayesian Information Criterion (BIC) scores for calibration of the one-factor model for the period 10th March, 1998 to 18th May, 2000, with various degrees of polynomial map $\Phi$. The values $a$ and $b$ are as defined in \eqref{eq:ab}. Notice that $a\geq 1$ and $b\geq 1$ in all cases. The most favourable BIC score is for degree~5.}
\label{tab:Table1}
\end{center}
\end{table}

We start by performing a calibration on one of the datasets considered by Barlow~\cite{Barlow2002}. This is the period from 10th March, 1998 to 18th May, 2000, and Figure~\ref{fig:barlow} shows Barlow's model fitted to this dataset. The calibration results for $n=0,\dots,5$ are shown in Table~\ref{tab:Table1}.

\begin{figure}
\centering
\includegraphics[width=0.9\linewidth]{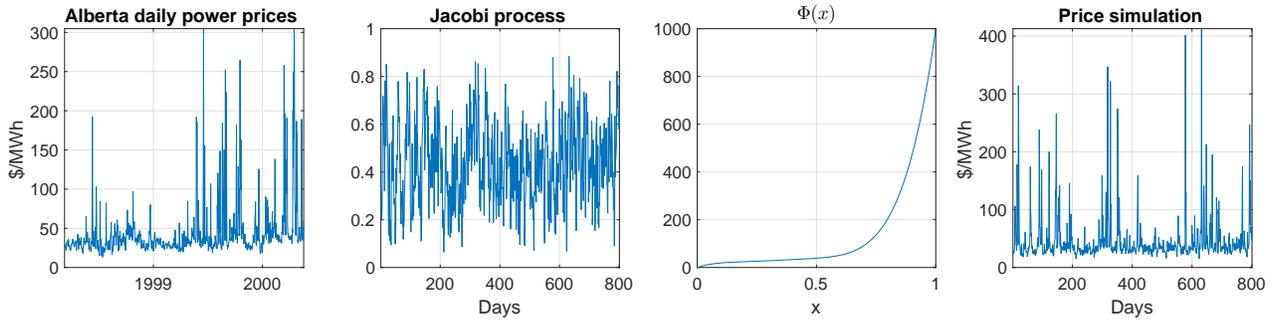}
\caption{A one-factor polynomial model fitted to Alberta prices. From left to right: historical daily average prices for the period 10th March 1998-18th May 2000; a simulation of $X_t$ produced using calibrated parameters for that period (c.f.~Table~\ref{tab:Table1}); the degree-5 polynomial map $\Phi(x)$; the resulting prices $S_t = \Phi(X_t)$.}
\label{fig:OneFactorFitA}
\end{figure}

The Bayesian Information Criterion allows us to compare the performance of models with varying numbers of parameters, and penalizes models with more parameters. As the degree of $\Phi$ is increased, the log likelihood increases, but peaks at degree~5. The resulting polynomial map and a sample simulation is shown in Figure~\ref{fig:OneFactorFitA}. It is clear that the model is successfully able to reproduce the short-term spikes evident in the historical data, and the varying slopes of the polynomial map mean that the model is able to capture varying levels of volatility at different price levels.

\begin{table}
\begin{center}
\begin{tabular}{|r||l|l|l|l|l|l|l|l|l|l|l|l|}
\hline
deg $\Phi$ & $\kappa$ & $\theta$ & $\sigma$ & $a$ & $b$ & $c_1$& $c_2$& $c_3$& $c_4$& $c_5$ &$LL$ & BIC\\
\hline
1 & 263.19 &   0.07 &   5.47 &   1.30 &  16.29 &     &     &     &     &     & 2709.6 & -5397.3 \\
2 & 210.87 &   0.19 &   5.70 &   2.50 &  10.47 &   0.85 &     &     &     &     & 2910.3 & -5791.4 \\
3 & 165.51 &   0.34 &   7.28 &   2.13 &   4.11 &   0.95 &   0.62 &     &     &     & 3405.8 & -6775.1 \\
4 & 162.62 &   0.43 &   6.94 &   2.90 &   3.85 &   0.86 &   0.76 &   0.79 &     &     & 3422.4 & -6801.2 \\
5 & 150.16 &   0.53 &   6.91 &   3.30 &   2.98 &   0.20 &   0.91 &   0.97 &   0.55 &     & 3464.7 & -6878.3 \\
6 & 150.13 &   0.53 &   6.91 &   3.32 &   2.97 &   0.18 &   0.91 &   0.97 &   0.55 &   0.03 & 3464.7 & -6871.0 \\
\hline
\end{tabular}
\caption{Optimal parameters, log-likelihoods and Bayesian Information Criterion (BIC) scores for calibration of the one-factor model for the period from 1st January, 2010 to 1st January, 2014, with various degrees of polynomial map $\Phi$. The values $a$ and $b$ are as defined in \eqref{eq:ab}. Notice that $a\geq 1$ and $b\geq 1$ in all cases. The most favourable BIC score is for degree~4.}
\label{tab:Table2}
\end{center}
\end{table}

As a second experiment, we consider the period from 1st January, 2010 to 1st January, 2014 (see Figure~\ref{fig:dailyprices}). The calibration results are shown in Table~\ref{tab:Table2}. This time the lowest BIC score is for degree~4, and the corresponding model is illustrated in Figure~\ref{fig:OneFactorFitB}.

\begin{figure}
\centering
\includegraphics[width=0.9\linewidth]{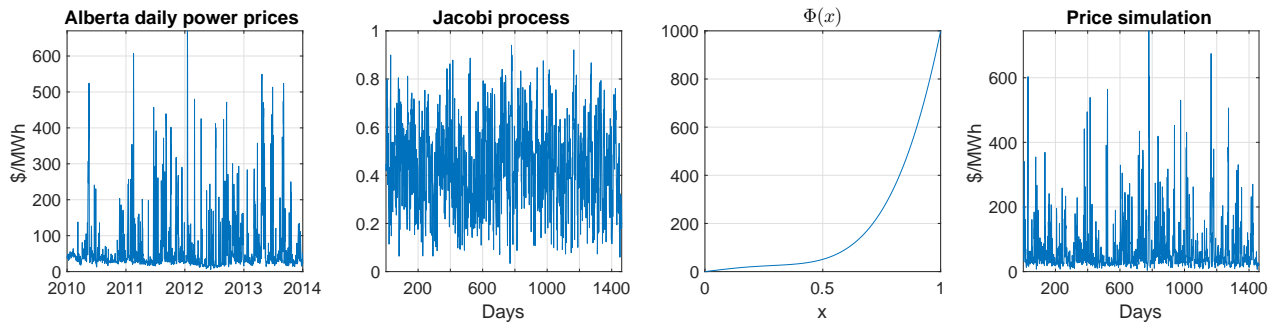}
\caption{A one-factor polynomial model fitted to Alberta prices. From left to right: historical daily average prices for the period 1st January, 2010 to 1st January, 2014; a simulation of $X_t$ produced using calibrated parameters for that period (c.f.~Table~\ref{tab:Table2}); the degree-4 polynomial map $\Phi(x)$; the resulting prices $S_t = \Phi(X_t)$.}
\label{fig:OneFactorFitB}
\end{figure}

The model is able to capture the increased, sustained level of spikes in this data set, using a slightly lower degree of polynomial map than in the first experiment. However, it is evident from both experiments that this the spikes rely on quite a highly volatile Jacobi process, with a strong level of mean reversion. This is a common feature of one-factor models for such time series, and is one reason for wanting to consider multi-factor models. This we do in the next section.

\subsection{Two-factor models}
One way to construct an $N+1$-factor model is to start with a
polynomial process on the unit simplex \cite{FilipovicLarsson2016}: 
\[Y_t\in \big\{[0,1]^N|Y_{1,t}+\dots Y_{N,t} = 1\} .\]
Armed with such a process, together with an independent Jacobi process $X_t$, we can construct a finite number of (possibly time-dependent) increasing polynomial maps $\Phi_n: [0,1]\mapsto [0,S_{\max}]$, $n=1,\dots,N$, with coefficient vectors $\vec{p}_n$, and generate prices via
\[
S_t = H(X_t)\sum_n Y_{n,t}\vec{p}_n.
\]
In the case of a two-factor model, $Y_t$ is a scalar process on $[0,1]$, and this equation can be written in the form
\begin{equation}
\label{eq:TwoFactorS}
S_t = H(X_t) \left[ (1-Y_t)\vec{p}_0 + Y_t\vec{p}_1\right] = (1-Y_t)\Phi_0(X_t) + Y_t\Phi_1(X_t).
\end{equation}

The factors $X_t$ and $Y_t$ are not directly observed, so a filtering method is needed for maximum likelihood calibration.
\subsubsection{Filtering equations}
\label{sec:Filtering}
We start with observations $s_m$ on day $t_m$, for $m=0,\dots,M$, and we wish to estimate the log likelihood
\[ LL = \sum_{m=0}^{M}\log p_{S_{m}|\vec{s}_{m-1}}(s_{m};\vec{s}_{m-1}), \]
where we have (slightly abusively) written $S_{m}$ for $S_{t_m}$, and used $\vec{s}_m=(s_m,\dots,s_0)$ for the collection of observations up to time $t_m$ and $\vec{s}_m=(S_m,\dots,S_0)$ for the corresponding vector of random variables. $p_{S_{m}|\vec{s}_{m-1}}(s;\vec{s}_{m-1})$ denotes the density of $S_{m}$ conditional on the observations $\vec{s}_{m-1}$.

The conditional transition density for $X_t$ is given by
\begin{equation} 
\label{eq:JacobiTransitionDiscrete}
p_{X_{m}|X_{m-1}}(x;x_{m-1}) = p(x,t_{m};x_{m-1},t_{m-1}),
\end{equation}
where, as in the one-factor case, $p$ is given by \eqref{eq:JacobiTransition} using $T-t=1/365$, and the parameters corresponding to those governing the Jacobi process for $X_t$. The unconditional density for $X_t$ is $p_{X_{\infty}}(x) = w(x)$.

If we assume we also know the conditional transition density $p_{Y_{m}|Y_{m-1}}(y;y_{m-1})$, and the unconditional density $p_{Y_{\infty}}(y)$, we can estimate $LL$ using optimal Bayes filtering (see \cite{Sarkka2013} for a recent treatment). This involves four steps: initialization, prediction, update and normalization, where the last three steps are carried out in a cycle, for $m=0,\dots,M$.
\begin{description}
\item[Initialization] The prior distribution for $(X_{-1},Y_{-1})$ is $p_{X_{-1},Y_{-1}|\emptyset}(x,y) = p_{X_{\infty}}(x)p_{Y_{\infty}}(y)$.
\item[Prediction] The Chapman-Kolmogorov equation yields the predicted conditional density
\begin{equation}
\label{eq:Prediction}
p_{X_{m},Y_{m}|\vec{s}_{m-1}}(x,y;\vec{s}_{m-1}) = \iint p_{X_{m}|X_{m-1}}(x;x')p_{Y_{m}|Y_{m-1}}(y;y')p_{X_{m-1},Y_{m-1}|\vec{s}_{m-1}}(x',y';\vec{s}_{m-1})\dd x' \dd y'.
\end{equation}
\item[Update] The Bayes rule yields
\begin{equation}
\label{eq:BayesRule}
p_{X_{m},Y_m|\vec{S}_m}(x,y;\vec{s}_m) = \frac{1}{Z_m}p_{S_m|X_m,Y_m}(s_m;x,y)p_{X_m,Y_m|\vec{S}_{m-1}}(x,y;\vec{s}_{m-1})
\end{equation}
\item[Normalization] The normalization constant $Z_m = p_{S_m|\vec{S}_{m-1}}(s_m;\vec{s}_{m-1})$ is given by
\begin{equation}
\label{eq:Normalization}
Z_m = \iint p_{S_m|X_m,Y_m}(s_m;x,y)p_{X_m,Y_m|\vec{S}_{m-1}}(x,y;\vec{s}_{m-1})\dd x\dd y.
\end{equation}
\end{description}
On completion of this process, $LL$ is given by $LL=\sum_{m=0}
^M\log Z_m$.

From \eqref{eq:TwoFactorS}, the density of $S_m$ conditional on $X_m$ and $Y_m$ is a Dirac delta distribution
\begin{equation}
\label{eq:TwoFactorDirac}
p_{S_m|X_m,Y_m}(s;x,y) = \delta_{(1-y)\Phi_0(x)+y\Phi_1(x)}(s).
\end{equation}
If, for $y\in[0,1]$ and $s$ given we define $\hat{x}(s,y)$ to be the (unique) solution of $(1-y)\Phi_0(x)+y\Phi_1(x)=s$, we can also write this as a Dirac delta distribution in $x$:
\begin{equation}
\label{eq:TwoFactorDiracx}
p_{S_m|X_m,Y_m}(s;x,y) = \frac{\delta_{\hat{x}(s,y)}(x)}{(1-y)\Phi'_0(x)+y\Phi'_1(x)}.
\end{equation}
This means that the double integrals in \eqref{eq:Prediction} and \eqref{eq:Normalization} are reduced to one-dimensional integrals over $y'$ and $y$.

\subsubsection{Regime-switching model}
Perhaps the simplest example of this setup is to take $Y_t\in\{0,1\}$ to be a continuous time Markov chain with rate parameters $\lambda_{0\to 1}$ and $\lambda_{1\to 0}$. This brings us into the setting of Example~\ref{eg:RegimeSwitching}, with  $B_{12}=0$. In this case we can write, for $y'\in\{0,1\}$, with $h$ denoting the time step between successive observations,
\begin{equation}
p_{Y_m|Y_{m-1}}(y;y') = P_{y'\to 0}(h)\delta_0(y)+P_{y'\to 1}\delta_1(y)
\end{equation}
or, alternatively, for $y\in\{0,1\}$,
\begin{equation}
p_{Y_m|Y_{m-1}}(y;y') =  P_{0\to y}\delta_0(y')+P_{1\to y}\delta_1(y'),
\end{equation}
where, for $j\in\{0,1\}$, and with $\lambda = \lambda_{0\to 1}+\lambda_{1\to 0}$,
\begin{equation}
P_{j\to 1-j} = \frac{\lambda_{j\to 1-j}}{\lambda}\big(1-\exp(-\lambda h)\big),
\end{equation}
and we have that
and $P_{j\to j} = 1-P_{j\to 1-j}$.

This regime-switching setting therefore results in a prior distribution of $p_{Y_{\infty}}(y) = \frac{\lambda_{1\to 0}}{\lambda}\delta_0(y)+ \frac{\lambda_{0\to 1}}{\lambda}\delta_1(y)$.  It also means that, using \eqref{eq:BayesRule} and \eqref{eq:TwoFactorDiracx}, \eqref{eq:Prediction} collapses to
\begin{equation}
\label{eq:RegimePrediction}
p_{X_{m},Y_{m}|\vec{s}_{m-1}}(x,y;\vec{s}_{m-1}) = \sum_{j=0}^1p_{m,j}(x)\delta_j(y)
\end{equation}
where, for $j=0,1$,
\begin{align}
\notag
p_{m,j}(x) &= P_{0\to j}\int p_{X_{m}|X_{m-1}}(x;x')p_{X_{m-1},Y_{m-1}|\vec{s}_{m-1}}(x',j;\vec{s}_{m-1})\dd x'\\
\notag &= \frac{P_{0\to j}}{Z_{m-1}\Phi'_j\big(\hat{x}(s_{m-1},j)\big)}\;p_{X_{m}|X_{m-1}}(x;\hat{x}(s_{m-1},j))\;p_{X_{m-1},Y_{m-1}|\vec{s}_{m-2}}(\hat{x}(s_{m-1},j),j;\vec{s}_{m-1})\\
&= \frac{P_{0\to j}\sum_{j'=0}^1p_{m-1,j'}(\hat{x}(s_{m-1},j'))}{Z_{m-1}\Phi'_j\big(\hat{x}(s_{m-1},j)\big)}\;p_{X_{m}|X_{m-1}}(x;\hat{x}(s_{m-1},j)).
\label{eq:Regimep}
\end{align}
Furthermore, using \eqref{eq:Normalization}, \eqref{eq:TwoFactorDiracx} and \eqref{eq:RegimePrediction}, $Z_m$ is given by
\begin{equation}
\label{eq:RegimeZ}
Z_m = \sum_{j=0}^1\frac{p_{m,j}(\hat{x}(s_{m},j))}{\Phi'_j(\hat{x}(s_{m},j))}.
\end{equation}
Finally, we note from \eqref{eq:JacobiTransition} that the transition density $p_{X_{m}|X_{m-1}}(x;x')$ and therefore the functions $p_{m,j}(x)$ are expressed as (in practice truncated) sums of Jacobi polynomials, and so \eqref{eq:Regimep} can  turned into a matrix equation between the coefficients of the expansions of $p_{m,0}$ and $p_{m,1}$ and of  $p_{m-1,0}$ and $p_{m-1,1}$.

\begin{table}
\begin{center}
\begin{tabular}{|r||r|r|r|r|r|r|}
\hline deg $\Phi_i$ & 1 & 2 & 3 & 4 & 5 & 6\\ 
\hline \hline $a$ & 3.47 & 7.55 & 5.69 & 6.34 & 7.74 & 7.34\\ 
\hline $b$ & 74.80 & 85.08 & 17.04 & 17.39 & 22.53 & 21.31\\ 
\hline $\sigma$ & 2.86 & 1.74 & 3.29 & 3.20 & 2.85 & 2.91\\ 
\hline \hline $\lambda_{0\to 1}$ & 28.44 & 19.64 & 21.60 & 24.99 & 23.75 & 25.24\\ 
\hline $\lambda_{1\to 0}$ & 11.84 & 244.98 & 217.12 & 214.94 & 207.34 & 211.07\\ 
\hline \hline $c^0_1$ & &0.60 &0.93 &0.91 &0.92 &0.92\\ 
\hline $c^0_2$ & & &0.61 &0.63 &0.64 &0.64\\ 
\hline $c^0_3$ & & & &0.19 &-0.02 &-0.13\\ 
\hline $c^0_4$ & & & & &-0.29 &-0.24\\ 
\hline $c^0_5$ & & & & & &0.11\\ 
\hline \hline $c^1_1$ & &-0.78 &1.00 &-1.00 &-1.00 &-1.00\\ 
\hline $c^1_2$ & & &-0.31 &-0.72 &-0.94 &-0.62\\ 
\hline $c^1_3$ & & & &0.88 &0.90 &-1.00\\ 
\hline $c^1_4$ & & & & &-0.39 &-0.62\\ 
\hline $c^1_5$ & & & & & &0.83\\ 
\hline \hline LL & 2041.5 & 2384.9 & 2555.8 & 2559.0 & 2559.4 & 2560.4\\ 
\hline BIC & -4049.6 & -4722.9 & -5051.4 & -5044.5 & -5031.9 & -5020.4\\ 
\hline
\end{tabular}
\caption{Optimal parameters, log-likelihoods and Bayesian Information Criterion (BIC) scores for calibration of the regime-switching two-factor model for the period from 10th March, 1998 to 18th May, 2000, with various degrees of polynomial map $\Phi$. The values $a$ and $b$ are as defined in \eqref{eq:ab}. Notice that $a\geq 1$ and $b\geq 1$ in all cases. The most favourable BIC score is for degree~3.}
\label{tab:Table3}
\end{center}
\end{table}

The results from using this procedure for our first historical data set are shown in Table~\ref{tab:Table3}. Notice that the most favourable BIC score is for $\Phi_0$ and $\Phi_1$ of degree~3, and that this score represents an improvement over the best score from the one-factor model (see Table~\ref{tab:Table1}). A simulation produced with the degree~3 parameters is shown in Figure~\ref{fig:RegimeSwitchingFitA}.

\begin{figure}
\centering
\includegraphics[width=\linewidth]{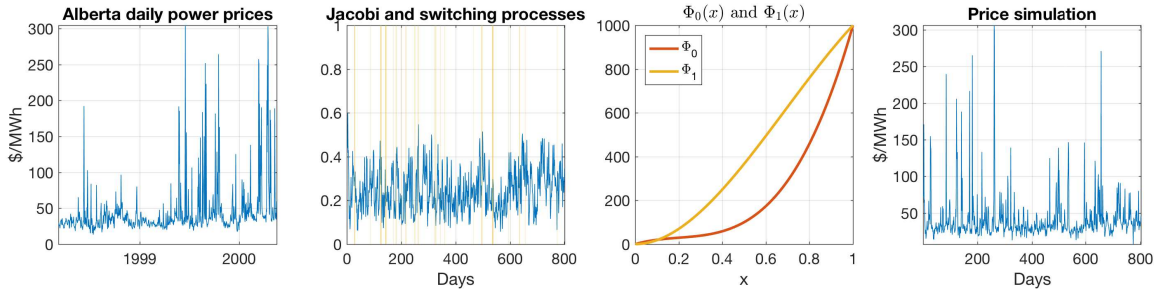}
\caption{A regime-switching two-factor polynomial model fitted to Alberta prices. From left to right: historical daily average prices for the period 10th March 1998-18th May 2000; a simulation of $X_t$and $Y_t$ produced using calibrated parameters for that period (c.f.~Table~\ref{tab:Table3}), with the times when $Y_t=1$ indicated in gold; the degree-3 polynomial maps $\Phi_0(x)$ and $\phi_1(x)$; the resulting prices $S_t = \Phi(X_t)$.}
\label{fig:RegimeSwitchingFitA}
\end{figure}
\begin{table}
\begin{center}
\begin{tabular}{|r||r|r|r|r|r|r|}
\hline deg $\Phi$ & 1 & 2 & 3 & 4 & 5 & 6\\ 
\hline \hline $a$ & 1.30 & 5.05 & 3.80 & 5.77 & 5.22 & 5.59\\ 
\hline $b$ & 16.31 & 42.15 & 10.35 & 9.70 & 5.62 & 5.38\\ 
\hline $\sigma$ & 5.47 & 3.11 & 4.62 & 4.38 & 5.03 & 5.00\\ 
\hline \hline $\lambda_{0\to 1}$ & 1.02 & 37.57 & 32.96 & 34.77 & 23.65 & 21.24\\ 
\hline $\lambda_{1\to 0}$ & 1.42 & 184.83 & 136.93 & 136.18 & 119.14 & 115.77\\ 
\hline \hline $c^0_1$ & &0.74 &0.93 &0.82 &0.23 &0.04\\ 
\hline $c^0_2$ & & &0.59 &0.73 &0.89 &0.90\\ 
\hline $c^0_3$ & & & &0.85 &0.96 &0.96\\ 
\hline $c^0_4$ & & & & &0.54 &0.55\\ 
\hline $c^0_5$ & & & & & &0.30\\ 
\hline \hline $c^1_1$ & &-1.00 &1.00 &-1.00 &-0.75 &-0.86\\ 
\hline $c^1_2$ & & &-0.93 &-0.96 &0.28 &0.50\\ 
\hline $c^1_3$ & & & &0.99 &1.00 &1.00\\ 
\hline $c^1_4$ & & & & &0.50 &0.53\\ 
\hline $c^1_5$ & & & & & &0.70\\ 
\hline \hline LL & 2711.6 & 3287.3 & 3510.7 & 3520.7 & 3546.0 & 3549.3\\ 
\hline BIC & -5386.8 & -6523.7 & -6955.7 & -6961.3 & -6997.2 & -6989.4\\ 
\hline
\end{tabular}
\caption{Optimal parameters, log-likelihoods and Bayesian Information Criterion (BIC) scores for calibration of the regime-switching two-factor model for the period from 1st January, 2010 to 1st January, 2014, with various degrees of polynomial map $\Phi$. The values $a$ and $b$ are as defined in \eqref{eq:ab}. Notice that $a\geq 1$ and $b\geq 1$ in all cases. The most favourable BIC score is for degree~3.}
\label{tab:Table4}
\end{center}
\end{table}

The results from using this procedure for the second historical data set (2010-2013) a set are shown in Table~\ref{tab:Table4}. Notice that the most favourable BIC score is for $\Phi_0$ and $\Phi_1$ of degree~5, and that this score represents a significant improvement over the best score from the one-factor model (see Table~\ref{tab:Table1}). A simulation produced with the degree~5 parameters is shown in Figure~\ref{fig:RegimeSwitchingFitB}. It is apparent for both data sets that the volatility of the Jacobi process is much reduced in comparison to the one-factor model, and that the spikes are being generated at least in part by the switching process, with the gold regions in the second graph in each case indicating the times when $Y_t=1$.

\begin{figure}
\centering
\includegraphics[width=\linewidth]{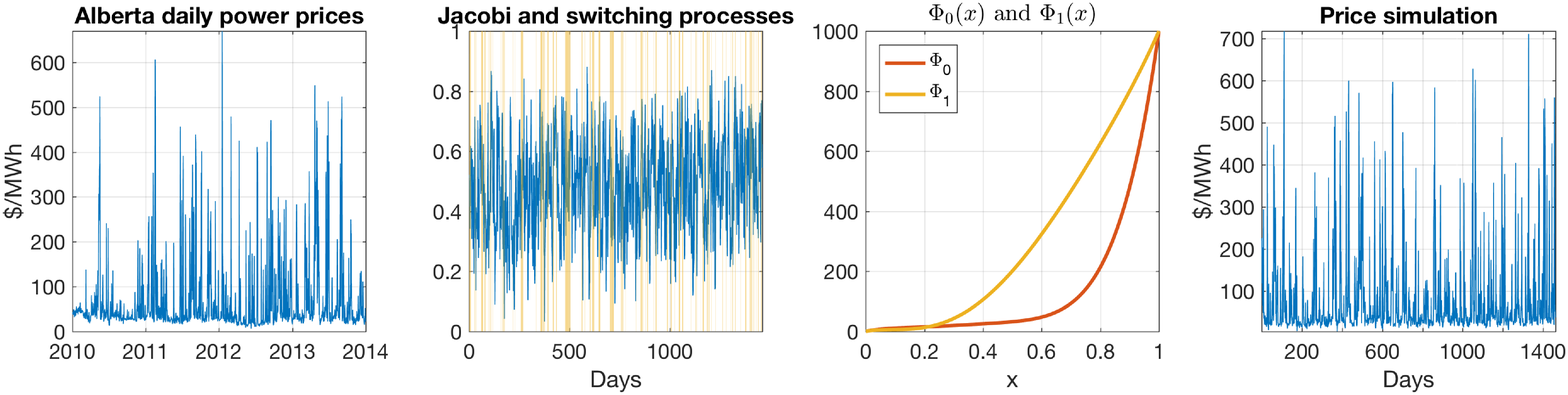}
\caption{A regime-switching two-factor polynomial model fitted to Alberta prices. From left to right: historical daily average prices for the period 1st January, 2010 to 1st January, 2014; a simulation of $X_t$ and $Y_t$ produced using calibrated parameters for that period (c.f.~Table~\ref{tab:Table4}), with the times when $Y_t=1$ indicated in gold; the degree-5 polynomial maps $\Phi_0(x)$ and $\phi_1(x)$; the resulting prices $S_t = \Phi(X_t)$.}
\label{fig:RegimeSwitchingFitB}
\end{figure}

\subsubsection{A double-Jacobi model}
Although the regime-switching model considered above gives and improved fit and a reduced volatility in comparison with the one-factor model, the volatility for the period 1st January, 2010 to 1st January, 2014 remains high, and so we consider the use of a second Jacobi process for $Y_t$ in \eqref{eq:TwoFactorS}. This fits into the framework described in Example~\ref{eg:DoubleJacobi}, if we take $B_{12}=B_{21}=0$.

In this setting, the conditional transition density $p_{Y_{m}|Y_{m-1}}(y;y_{m-1})$, and the corresponding unconditional density, are also specified using \eqref{eq:JacobiTransition}. The one-dimensional integrals described in Section~\ref{sec:Filtering} now need to be computed, and in the experiments reported below we used Gauss-Legendre integration for these computations. In this setting, we express the functions $p_{X_{m},Y_m|\vec{S}_m}(x,y;\vec{s}_m)$ as a truncated double sum expansion in Jacobi polynomials, and as before the optimal Bayes filter iteration generates a matrix equation to update the coefficients of this expansion, and the resulting normalization constants $Z_m$ generate the log likelihood $LL$ associated with the given set of parameters.

The results from applying the model to the period 1st January, 2010 to 1st January, 2014 are shown in Table~\ref{tab:Table5}, and illustrated in Figure~\ref{fig:JJA}. It can be seen from Table~\ref{tab:Table5} that the most favourable BIC score is obtained for degree~5 polynomial maps, and so this model is what is illustrated in Figure~\ref{fig:JJA}. There we see a simulation of $X_t$ and of $Y_t$, the polynomial maps, and the resulting price simulation. It is evident that the two processes can be thought of as \emph{fast} and \emph{slow}, with the (slow) $Y_t$ wandering in a somewhat leisurely fashion across the interval $[0,1]$, bringing about periods of rapid swings in prices when it is near $1$, and periods of lower prices when it is closer to $0$.

\begin{figure}
\centering
\includegraphics[width=0.85\linewidth]{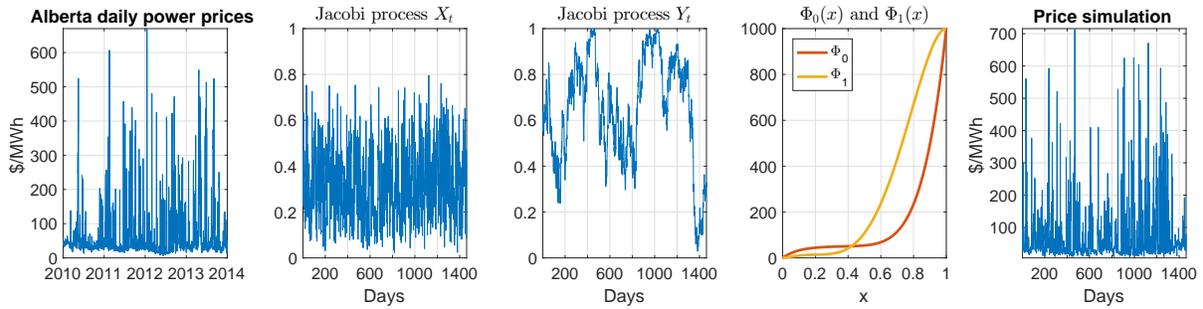}
\caption{A double Jacobi two-factor polynomial model fitted to Alberta prices. From left to right: historical daily average prices for the period 1st January, 2010 to 1st January, 2014; a simulation of $X_t$ produced using calibrated parameters for that period (c.f.~Table~\ref{tab:Table5}); a simulation of $Y_t$ produced using calibrated parameters for that period; the degree-5 polynomial maps $\Phi_0(x)$ and $\phi_1(x)$; the resulting prices $S_t = \Phi(X_t)$.}
\label{fig:JJA}
\end{figure}
\begin{table}
\begin{center}
\begin{tabular}{|r||r|r|r|r|r|}
\hline deg $\Phi$ & 2 & 3 & 4 & 5 & 6\\ 
\hline \hline $a_X$ & 2.54 & 2.52 & 2.83 & 3.10 & 3.03\\ 
\hline $b_X$ & 10.72 & 4.33 & 4.03 & 5.54 & 5.23\\ 
\hline $\sigma_X$ & 5.70 & 7.16 & 7.09 & 6.38 & 6.53\\ 
\hline \hline \hline $a_Y$ & 1.03 & 7.27 & 10.40 & 2.04 & 2.20\\ 
\hline $b_Y$ & 4.08 & 2.72 & 4.34 & 1.22 & 1.36\\ 
\hline $\sigma_Y$ & 0.95 & 1.06 & 1.02 & 1.38 & 1.36\\ 
\hline \hline $c^0_1$ &0.81 &1.00 &1.00 &0.72 &0.76\\ 
\hline $c^0_2$ & &0.79 &0.81 &0.89 &0.89\\ 
\hline $c^0_3$ & & &0.05 &0.77 &0.77\\ 
\hline $c^0_4$ & & & &0.61 &0.57\\ 
\hline $c^0_5$ & & & & &-0.08\\ 
\hline \hline $c^1_1$ &1.00 &0.99 &0.97 &0.99 &0.99\\ 
\hline $c^1_2$ & &0.61 &0.66 &0.66 &0.66\\ 
\hline $c^1_3$ & & &0.71 &-1.00 &-1.00\\ 
\hline $c^1_4$ & & & &-0.79 &-0.59\\ 
\hline $c^1_5$ & & & & &0.34\\ 
\hline \hline LL & 2915.7 & 3633.5 & 3637.1 & 3648.8 & 3650.5\\ 
\hline BIC & -5773.1 & -7194.1 & -7186.7 & -7195.7 & -7184.3\\ 
\hline
\end{tabular}

\caption{Optimal parameters, log-likelihoods and Bayesian Information Criterion (BIC) scores for calibration of the double Jacobi two-factor model for the period from 1st January, 2010 to 1st January, 2014, with various degrees of polynomial map $\Phi$. The values $a$ and $b$ are as defined in \eqref{eq:ab}. Notice that $a\geq 1$ and $b\geq 1$ in all cases. The most favourable BIC score is for degree~5.}
\label{tab:Table5}
\end{center}
\end{table}

\section{Concluding remarks}
Although in this paper we have limited our attention to polynomial diffusion models with bounded state spaces suitable for markets where prices are constrained to lie within an interval, as mentioned in the introduction the class of polynomial processes is much wider. For example, in other settings, where perhaps prices are constrained to lie within a semi-infinite interval, polynomial maps that are increasing on the half line can be used in conjunction with exponential L\'evy, CIR or (I)GBM processes. This would allow for the development of reduced form and structural models for energy markets in a range of settings, with a built-in mechanism for generating rich dynamics.

While the richness of the dynamics that can be generated with polynomial processes provides strong motivation for their use, and has been the focus of the work presented here, the characterizing feature of polynomial processes provides an even stronger motivation. That is, of course, the fact that conditional expectations and therefore forward prices have an (almost) explicit representation. This offers the potential for models that are able to capture the joint dynamics of spot prices and forward curves, and will be the focus of future work. 

\section*{Acknowledgements}
This work was supported by the Natural Sciences and Engineering Research Council of Canada (NSERC). The research leading to these results has also received funding from the European Research Council under the European Union's Seventh Framework Programme (FP/2007-2013) / ERC Grant Agreement n. 307465-POLYTE.

\section*{Appendices}
\appendix
\section{Jacobi processes}
(See \cite{DelbaenShirakawa2002}, \cite{FormanSorensen2008} \cite{GourierouxRenaultValery2007}, and \cite{Szego1959} for examples of the use of Jacobi processes in finance, and for basic information about their properties, and the properties of Jacobi polynomials.)

We consider the Jacobi process defined by \eqref{eq:Jacobi}.
If we define 
\begin{equation}
\label{eq:ab}
a = \frac{2\kappa\theta}{\sigma^2}\word{and}
b=\frac{2\kappa(1-\theta)}{\sigma^2},
\end{equation}
then we can rewrite \eqref{eq:Jacobi} in the form
\begin{equation}
\label{eq:JacobiAB}
\dd X_t = \frac{\sigma^2}{2}\big(a(1-X_t) - bX_t\big)\dd t + \sigma\sqrt{X_t(1-X_t)}\,dW_t.
\end{equation}
The structure of the process is perhaps clearer in this form. The process lives in the interval $[0,1]$. Moreover, if $a\geq 1$ (resp.~$b\geq 1$), then the boundary at $0$ (resp.~$1$) is unattainable (for a proof see, for example, \cite{DelbaenShirakawa2002}).

The transition density (from $t$ to $T>t$) is given by
\begin{equation}
\label{eq:JacobiTransition}
 p(x,T;y,t) =\sum_{n=0}^{\infty}k_n w(x)\psi_n(x)\psi_n(y)\ee^{-\mu_n(T-t)}, 
\end{equation} 
where
$\psi_n(x) = J_n(x;a+b-1,a)$,
\[ \mu_n = \kappa n+\frac{\sigma^2}{2}n(n-1) = \frac{n\sigma^2}{2}(a+b-1+n),\] 
\[ k_n = \frac{(a+b-1+2n)(a)_n(a+b)_{n}}{n!(a+b-1+n)(b)_n}\]  
and 
\[ w(x) = w(x;a,b) =\frac{\Gamma(a+b)}{\Gamma(a)\Gamma(b)} x^{a-1}(1-x)^{b-1}.\]
Note that we have used the notation $(x)_k:=\Gamma(x+k)/\Gamma(x)$.

The Jacobi polynomials $J_n(\cdot;u,v)$ satisfy the recursion (suppressing the dependence on $u$ and $v$ for convenience)
\[ x J_n(x) = \alpha_nJ_{n-1}(x) + \beta_n J_n(x) + \gamma_n J_{n+1}(x), \]
where 
\[ \alpha_n= \frac{n(v-u-n)}{(u+2n)(u+2n-1)}, \; \beta_n= \frac{2n(u+n)+v(u-1)}{(u+2n-1)(u+2n+1)}\;\text{and}\;
\gamma_n= -\frac{(v+n)(u+n)}{(u+2n+1)(u+2n)},\]
and we have $J_0(x) = 1$ and $J_1(x) = 1-\frac{(u+1)}{v}x$.

The Jacobi polynomials $J_n$ are eigenfunctions of the differential operator ${\cal L}$ defined by its action on a smooth function $g$ by
\[ \big[{\cal L}g\big](x) = \big(v-(u+1)x\big)\pdiff{g}{x}(x) + x(1-x)\pdifft{g}{x}(x). \]
The corresponding eigenvalues are $-n(u+n)$.

It follows that the functions $\psi_n(\cdot;a,b)$ are eigenfunctions of the infinitesimal generator of the Jacobi process, with eigenvalues $-\mu_n$. They are orthonormal with respect to the weight $w$, so \eqref{eq:JacobiTransition} implies that, for each $n$,
\[ \E[\psi_n(X_T) | X_t = x] = \ee^{-\mu_n(T-t)}\psi_n(x). \]

\section{Increasing polynomial maps on a bounded interval}
\label{AppendixB}
Increasing polynomial maps from $[0,1]$ to $[0,1]$ can be constructed from non-negative polynomials on $[0,1]$. Such polynomials have been characterized in various ways - going as far back as \cite{KarlinShapley1949}. Here we give an alternative characterization in terms of products of quadratic factors. The following result characterizes the set of nonnegative quadratic polynomials on $[0,1]$ that integrate to $1$.
\begin{proposition}\label{prop:icecream}
For $\alpha\in[-3,3/2]$, let $\overline{\beta}$ be defined by
\begin{equation}
\overline{\beta}(\alpha) = \begin{cases}
\frac12 + \frac{\alpha}{6}, & \alpha\in[-3,\frac53]\\
\sqrt{\frac38-(\alpha-\frac34)^2}, & \alpha\in(\frac53,\frac32].
\end{cases}
\end{equation}
Then, for any $\alpha\in[-3,\frac32]$ and $\beta$ satisfying $\abs{\beta}\leq \overline{\beta}(\alpha)$, the polynomial $q_{\alpha,\beta}(x) = \alpha x^2 + 2\beta x + 1-\frac23 \alpha$ is nonnegative on $[-1,1]$ and integrates to $2$, and $\phi(x) = q_{\alpha,\beta}(2x-1)$ is non-negative on $[0,1]$ and integrates to $1$. Moreover, any quadratic polynomial $\phi$ with these properties can be written in this way.
\end{proposition}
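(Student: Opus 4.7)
The plan is to reduce the problem to one on $[-1,1]$ via the substitution $y=2x-1$: a quadratic $\phi$ on $[0,1]$ corresponds uniquely to a quadratic $q(y)=\phi((y+1)/2)$ on $[-1,1]$, and the normalization $\int_0^1 \phi(x)\,dx=1$ becomes $\int_{-1}^1 q(y)\,dy=2$. Writing $q(y)=\alpha y^2+2\beta y+c$ and enforcing the integral constraint pins $c$ down as an affine function of $\alpha$, so the feasible set becomes a region in the $(\alpha,\beta)$-plane cut out by the single pointwise condition $q_{\alpha,\beta}\ge 0$ on $[-1,1]$.

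The nonnegativity analysis then splits into two regimes according to the location of the vertex $y^\ast=-\beta/\alpha$. When $\alpha\le 0$, or when $\alpha>0$ but $|\beta|>\alpha$ (vertex outside $[-1,1]$), the minimum of $q$ on $[-1,1]$ is attained at an endpoint, and $q(\pm 1)\ge 0$ reduces to an affine bound of the form $|\beta|\le \tfrac12+\tfrac{\alpha}{6}$, which is the first branch of $\overline{\beta}$. When $\alpha>0$ with $|\beta|\le \alpha$, the minimum is the vertex value $c-\beta^2/\alpha$, so nonnegativity becomes $\beta^2\le \alpha c(\alpha)$; completing the square in $\alpha$ on the right rewrites this as an elliptical envelope centred at $\alpha_0=3/4$ of the form $\beta^2+\text{const}\cdot(\alpha-3/4)^2 \le 3/8$, matching the second branch.

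I would then locate the transition between the two regimes by equating the affine and elliptical envelopes on their common domain; algebraically this reduces to a quadratic in $\alpha$ with a double root, which pins down the critical value at which both constraints bind simultaneously and also verifies that the vertex-inside condition $|\beta|\le\alpha$ is automatically satisfied throughout the elliptical regime. The admissible range of $\alpha$ is read off as the interval on which $\overline{\beta}(\alpha)\ge 0$: the lower endpoint $-3$ comes from the affine piece vanishing, and the upper endpoint $3/2$ from the elliptical piece vanishing. A short bijectivity argument then finishes the proof: $\alpha$ and $\beta$ are (up to a factor) the quadratic and linear coefficients of $q$, and the integral fixes the constant term, so every nonnegative quadratic on $[-1,1]$ with integral $2$ is captured exactly once. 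Pulling back through $y=2x-1$ gives the statement for $\phi$.

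The main obstacle will be the careful bookkeeping at the transition point: confirming that the affine and elliptical pieces of $\overline{\beta}$ join continuously at the critical $\alpha$, treating the degenerate case $\alpha=0$ (pure linear $q$) cleanly, and verifying that no nonnegative quadratic falls through the cracks between the two regimes in the case analysis.
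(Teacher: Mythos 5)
Your plan follows essentially the same route as the paper's proof: the integral constraint fixes the constant term, the endpoint conditions $q_{\alpha,\beta}(\pm 1)\geq 0$ give the affine (cone) branch, and the interior-vertex case gives the elliptical branch, exactly the case analysis carried out in the paper. One remark for when you do the bookkeeping you anticipate: the vertex condition comes out as $\beta^2\leq\alpha\left(1-\frac{2}{3}\alpha\right)$, i.e.\ $\beta^2+\frac{2}{3}\left(\alpha-\frac{3}{4}\right)^2\leq\frac{3}{8}$, and your tangency computation (a quadratic with double root) places the crossover at $\alpha=\frac{3}{5}$ where $\overline{\beta}=\alpha$, so your ``const'' is $\frac{2}{3}$ and the constants in the displayed definition of $\overline{\beta}$ need adjusting to match.
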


\begin{proof}
The proof of Proposition~\ref{prop:icecream} involves only elementary considerations. Clearly, any quadratic polynomial $q$ satisfying $\int_{-1}^1q(x)\dd x = 2$ can be written in the form $q(x) = q_{\alpha,\beta}(x) = \alpha x^2 + 2\beta x + 1-\frac23 \alpha$. In order to guarantee non-negativity, we note that $q_{\alpha,\beta}(\pm 1)\geq 0$ iff.~$1+\frac13 \alpha\geq 2\abs{\beta}$ (this region can be seen as the cone emanating from $(-3,0)$ in Figure~\ref{fig:icecream}). If there are no roots in $[-1,1]$, this condition is necessary and sufficient. There will be roots in $[-1,1]$ iff.~$\abs{\beta}<\alpha$. In this case, non-negativity holds iff.~the minimum value is non-negative, and this results in the condition that $(\alpha,\beta)$ be inside the ellipsoidal region shown in Figure~\ref{fig:icecream}. 
\end{proof}
All increasing polynomial maps $\Phi: [0,1]\mapsto[0,1]$ can be constructed from pairs of parameters $(\alpha_1,\beta_1),\dots,(\alpha_k,\beta_k)$ by writing
\begin{equation}
\label{eq:phiproduct}
\phi(x) = \Pi_{i=1}^kq_{\alpha_i,\beta_i}(2x-1),
\end{equation}
and setting 
\begin{equation}
\Phi(x) = \frac{\int_0^x\phi(t)\dd t}{\int_0^1\phi(t)\dd t}.
\label{eq:Phi}
\end{equation}
Note that the degree of $\Phi$ is $2k+1$, if $\alpha_i\ne 0$ for each $i$. Note too that, if $\phi(x)\geq 0$ on $[0,1]$, any real zeros of $\phi$ in $[0,1]$ have to occur in pairs, so that the Fundamental Theorem of Algebra implies that any nonnegative polynomial on $[0,1]$ can be written in the form \eqref{eq:phiproduct}.

\begin{figure}
\includegraphics[width=\linewidth]{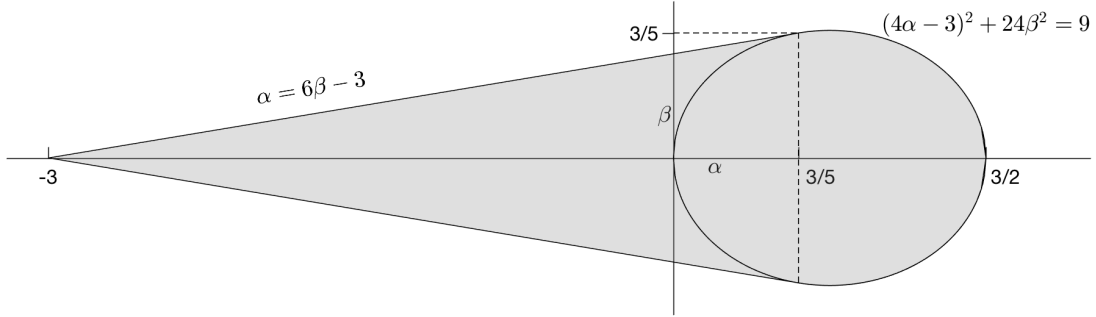}
\caption{The set of parameters $(\alpha,\beta)$ that generate normalize non-negative quadratic polynomials on $[-1,1]$ (see Proposition~\ref{prop:icecream}).}
\label{fig:icecream}
\end{figure}

\bibliographystyle{plain}
\bibliography{Filipovic_Larsson_Ware}

\end{document}